\documentclass[11pt,a4paper,twoside]{article}

\usepackage[utf8]{inputenc}
\usepackage[margin=2.5cm]{geometry}
\usepackage[shortlabels]{enumitem}
\usepackage[page,toc,title]{appendix}
\usepackage{amsmath,amssymb,graphicx,xcolor,subcaption,float,amsthm,mathrsfs,mathtools,bbold,framed,url,amscd,soul,mathabx,tikz,tikz-cd}
\setlength{\marginparwidth}{2cm}
\usepackage[color=yellow]{todonotes}
\usepackage{comment}
\usepackage{mathtools}
\usepackage{graphicx}
\setstcolor{red}
\usepackage{cancel}

\theoremstyle{plain}
\newcommand{\thistheoremname}{}
\newtheorem*{genericthm*}{\thistheoremname}
\newenvironment{namedthm*}[1]
  {\renewcommand{\thistheoremname}{#1}%
   \begin{genericthm*}}
  {\end{genericthm*}}

\usepackage{hyperref}
\hypersetup{
colorlinks=true,
}
\numberwithin{equation}{section}

\newcommand{\nocontentsline}[3]{}
\newcommand{\tocless}[2]{\bgroup\let\addcontentsline=\nocontentsline#1{#2}\egroup}

\newtheorem{proposition}{Proposition}

\theoremstyle{definition}

\theoremstyle{remark}
\newtheorem{remark}{Remark}[section]

\usepackage{fancyhdr}
\pagestyle{fancy}
\headheight=15pt
\fancyhead{}
\fancyfoot{}

\fancyhead[LO,RE]{\MakeUppercase \rightmark}
\fancyhead[LE,RO]{\textbf{\thepage}}

\tolerance=1
\emergencystretch=\maxdimen
\hyphenpenalty=10000
\hbadness=10000

\newcommand{\bs}[1]{\boldsymbol{#1}}

\def\p{{\partial}}
\def\zh{{\bs{\widehat{z}}}}

\newcommand{\ob}[1]{\overline{#1}}

\newcommand{\mc}[1]{\mathcal{#1}}

\newcommand{\mcal}[1]{\mc{#1}}

\def\zh{{\bs{\widehat{z}}}}

\def\p{{\partial}}

\def\bR{{\boldsymbol{R}}}
\def\bu{{\boldsymbol{u}}}

\def\bx{{\boldsymbol{x}}}

\setcounter{tocdepth}{2}

\begin{document}
	\title{A thermal Green-Naghdi model with time dependent bathymetry and complete Coriolis force}
	\author{
    Darryl D. Holm\thanks{Department of Mathematics, Imperial College London} \and Oliver D. Street\thanks{Grantham Institute, Imperial College London, email: o.street18\@imperial.ac.uk }}
	\date{\today}
	\maketitle
	\begin{abstract}
        This paper extends the theoretical Euler-Poincar\'e framework for modelling ocean mixed layer dynamics. Through a symmetry-broken Lie group invariant variational principle, we derive a generalised Green-Naghdi equation with time dependent bathymetry, a complete Coriolis force, and inhomogeneity of the thermal buoyancy. The nature of the model derived here lends it a potential future application to wave dynamics generated by changes to the bathymetry.
	\end{abstract}

\tableofcontents

\section{Introduction} \label{Intro}

The Green-Naghdi model is a two dimensional geophysical fluid model for the dynamics of a shallow fluid with free upper boundary. The lower boundary of the domain is usually taken to be the (fixed) bathymetry below the fluid. In reduced barotropic-mode models (also called $1.5$ layer models) for oceanic mixed layer dynamics, the upper boundary of the domain is a free surface and the lower boundary is taken to be the thermocline. In these models, the mixed layer of the ocean comprises the transitional state of fluid dynamics coupling the wind forcing at the ocean free surface and the oscillations of the thermocline below, riding the passive stably stratified depths of the deep ocean. With the assumption of a dynamically passive deep lower layer, the thermocline moves exactly out of phase with the upper free surface. In keeping with the dynamically passive lower layer assumption, most of this variation results from the deformation of the bottom boundary (thermocline). Thus, this presents the modelling challenge of identifying the influence of introducing time dependence into the `bathymetry' on the properties and geometric structure of the Green-Naghdi model.

These large movements of the thermocline give rise to non-hydrostatic pressure terms which, in vertically-integrated shallow-water models such as the Green-Naghdi equations, show up as dispersive terms involving higher-order derivatives of the velocity \cite{GN, BMP, DH88, NMS}. Such nonlinearly dispersive higher-order terms are known to matter dramatically in shallow water flow over an obstacle, \cite{NMS}. Nonetheless, these terms are usually neglected in the hydrostatic approximations, although their nonlinear dispersive effects on the low frequency variability may sometimes be significant. It should be noted that Green-Naghdi in particular is not advised in the multi-layer configuration, because the multi-layer Green--Naghdi equations are ill-posed \cite{CHP2010}.

    Traditional approximated models of geophysical flow involve assumptions due to the aspect ratio of the domain and flow. When approximating a full rotating fluid model in spherical geometry, it is standard to assume that the locally horizontal components of the rotation vector are sufficiently small to be ignored. This approximation can be poor when horizontal length scales are sufficiently well resolved, or when the flow has significant dynamics in the vertical direction. A correction to this can be made by reintroducing horizontal components into the rotation vector \cite{DS05,SD2010}, thereby considering the complete Coriolis force in an approximate two dimensional model.

    In this work, following \cite{HolmLuesink2021}, we will also be introducing thermal effects into the Green-Naghdi model. Modelling thermal fluids has a long and diverse history. There is naturally a particular interest within the geophysical sciences, and we here bring attention to the development of two dimensional thermal models of fluid flow in a shallow domain. Such models were notably developed by Ripa \cite{Ripa93,Ripa95}, who contributed to the thermal shallow water equation (sometimes referred to as the `Ripa model', or ${\rm IL}^{0}$ in Ripa's notation), later considered further in \cite{Dellar2003} and \cite{WaDe2013}. The thermal shallow water equation was extended in \cite{BV21a} to include polynomial (of degree $\alpha$) stratification in the vertical direction (the ${\rm IL}^{0,\alpha}$ model) and, in a separate work \cite{BV2021}, to the multilayer setting. See \cite{CHP2010} for a discussion of the Green-Naghdi model in a multilayer context. Recent interest in thermal ocean modelling has involved the consideration of stochastic parameterisation and data assimilation methodologies \cite{HLP2021}, the variational structure of the thermal vertical slice model attributed to Eady \cite{HHS2024a}, and the study of wave-current interaction in a thermal fluid \cite{HHS2024b}.
    
In this work we formulate the theoretical framework for comparing the solution behaviour of vertically integrated shallow water models, using traditional Coriolis force and fixed bathymetry, with the solution behaviour of a new extended Green-Naghdi model. The new model presented here is a variant of the thermal rotating Green--Naghdi equations, using a complete Coriolis force and featuring interactions between the fluid and a moving bathymetry representing the variability of the thermocline at the bottom boundary of the oceanic mixed layer. In formulating the theoretical model for this comparison, we expect to enable measurements of the sensitivity of surface wave propagation to its nonlinear coupling to the dynamics of the bathymetry, as well as the effect of thermal buoyancy, nonlinear dispersion and the complete Coriolis force. In addition, this work creates the theoretical methodology needed to include contemporary methodologies of stochastic uncertainty parameterisation which preserve the model's behaviour \cite{holm2015variational,ST2024,HH2021,ErwinThesis}.

\paragraph{Overview of the present work.} This paper extends the theoretical basis for simulations of mixed layer dynamics in $1.5$ layer models to determine the impacts of non-hydrostatic and thermal effects as well as non-traditional Coriolis forces. This is accomplished by deriving a version of the classic Green--Naghdi model which we extend to describe the mixed layer flow with long-wave dynamics of the upper free surface coupled to the dynamics of the thermocline layer at the bottom of the mixed layer, all riding on inert depths below.%
\footnote{This paper is based on unpublished notes from the mid 1990's during early work in the modelling phase of the COSIM (Computational Ocean and Sea Ice Model) project at Los Alamos National Laboratory. Some of the bibliography from that time is retained in this section as historical references whose presence might be helpful in tracing their influence in the development of modern methods of ocean mixed layer modelling during the past three decades.}

We expect that coupling the thermocline dynamics to horizontal gradients of the buoyancy in the mixed layer may have observable effects on the solution for the surface wave elevation. If so, these effects might be useful as diagnostics for the data produced by the NASA and ESA Surface Water and Ocean Topography (SWOT) satellite mission. 

\subsection{Some applications of the model and a description of the domain}

Historically, the Green-Naghdi (GN) model is a two dimensional model for a thin three dimensional fluid with a free surface. That is, through a vertical averaging procedure, it provides a description of two dimensional flow and wave dynamics within a `thin' fluid layer. The set up of the domain is therefore as follows. In the horizontal directions, we have coordinates $\bx =(x,y) \in \mathcal{D} \subseteq \mathbb{R}^2$ and in the vertical direction we have a coordinate $z \in [-H(x,y), \eta(x,y,t)]$, where $z=-H$ is the bottom boundary and $z=\eta$ is the free upper surface. Thus, the three dimensional domain is $\mathcal{D}\times[-H,\eta] \subset \mathbb{R}^3$. In this paper, we modify this setup such that the bottom boundary is permitted to have some exogenous time dependence. This setup is interesting for several physical reasons, including tsunami/wave generation and the ocean mixed layer. The derivation of the Green-Naghdi model for the mixed layer assumes a dynamically passive, hydrostatic lower layer. This assumption leads to the following relations  for partitioning the thickness of the mix-layer, denoted as $\eta(\bx,t)$, \cite{Gill1982,NMS}
\begin{align}\begin{split} 
  H(\bx,t) &= \rho_r h(\bx,t) - (1+\rho_r)B\,, \\
  \eta(\bx,t) &= (1-\rho_r) h(\bx,t) + (1+\rho_r)B
  \,. \label{bh}
\end{split}\end{align}
The symbol $B$ denotes the (constant) mean depth of the upper layer, $H(\bx,t)$ is the depth of the thermocline, and $\eta(\bx,t)$ is the height of the free surface,
using the undisturbed free surface as the reference level, at $\eta=0$.
The model parameter $\rho_r$ is the
ratio of the density of the upper layer to that of the lower layer.
Here, we assume $\rho_r = 1 - \epsilon$ with $\epsilon\ll1$, representing the small density difference 
of say 0.5\% across the thermocline.
It is clear from the relation $h=\eta+H$ in \eqref{bh} that most of the barotropic variation
in the surface layer depth $h(\bx,t)$ is driven by the movement of the thermocline 
$H(\bx,t)$; since the thermocline thickness $H$ in \eqref{bh} is more heavily weighted 
than the wave elevation, $\eta(\bx,t)$. 

\subsection{The historical Green-Naghdi Equations}

The historical Green-Naghdi equations are given by \cite{GN}
\begin{align}\begin{split} 
  \frac{\p \bu}{\p t}
   &=  - \,(\bu \cdot \nabla) \bu - g \nabla (h - H)
    + \frac{1}{h} \nabla \Big(h^2 \frac{dA}{dt}\Big) - \frac{dB}{dt} \nabla H \,,
  \\
  &\qquad \frac{\p h}{\p t} = - \nabla\cdot\, (h \bu) \,.
  \label{gn}
\end{split}\end{align}
Here, $\bu$ is the \emph{two dimensional} velocity field, $h=\eta(\bx,t)+H(\bx)$ is the local depth of the water,
$z=\eta(\bx,t)$ is the free surface height and $z=-H(\bx)$ is the
\emph{fixed} bottom boundary for the standard GN equations. Moreover, $\nabla$ denotes the two dimensional gradient in the $(x,y)$ coordinates and $d/dt = \p_t + \bu\cdot\nabla$ is the material derivative. The quantities $A$ and $B$ in \eqref{gn} are given by
\begin{align}\begin{split} 
  A & =   {\frac13} h \nabla\cdot \bu
                                   + {\frac12} \bu\cdot\nabla H  \,,
\\
  B & =   {\frac12} h \nabla\cdot \bu
                                   + \bu\cdot\nabla H  \,.
  \label{ab}
\end{split}\end{align}
The historical GN equations in \eqref{gn} provide a
depth-averaged (barotropic) description of shallow water motion with a
free surface under gravity, $g$, in {\it nondominant} asymptotics. That
is, the GN equations are derived using shallow water scaling
asymptotics for a domain of small aspect ratio, but no restriction is
placed on the Froude number, or nonlinearity parameter of the flow,
cf. reference \cite{CH1}.

The historical GN equations in \eqref{gn} have been derived by making a
solution Ansatz of columnar motion in which the horizontal fluid
velocity is independent of the vertical coordinate. One then imposes
three-dimensional incompressibility and certain symmetry requirements.
The GN equations are derived in \cite{GN} by requiring the incompressible
columnar motion to satisfy conservation of energy and invariance under
rigid body translations.  Equations were rediscovered in \cite{BMP} by inserting
the columnar motion solution Ansatz, $\p\bu/\p z=0$, and incompressibility
into the Euler equations, then averaging over depth. These equations have also 
been derived from a variational principle in \cite{MS} by inserting the
columnar motion Ansatz into the variational principle for the
Euler equations for an inviscid incompressible fluid, and explicitly
performing the vertical integrations before varying. The Hamiltonian
structure of these equations has been discussed in \cite{DH88}.

  The approach of Miles and Salmon \cite{MS} in obtaining the historical GN
equations by restricting to columnar motion in Hamilton's principle
for an incompressible fluid with a free surface affords a convenient
starting point for extending the GN equations to include effects of a
rotating frame and weak buoyancy stratification. Our derivation by
Hamilton's principle, following the derivation in \cite{HolmLuesink2021,ErwinThesis} inspired by Miles and Salmon, also provides a systematic derivation of the
Kelvin circulation theorem and potential vorticity conservation laws
for the extended GN equations. 

\section{An Euler-Poincar\'e derivation of the equations of motion}\label{GN}

In this section, we will augment the historical Green-Naghdi equations with the complete Coriolis force, thermal effects, and a time-varying lower boundary. We will derive the model by performing approximations in Hamilton's Principle. For further details about these approximations, we note that the nondimensional form of a stochastic perturbation of the historical Green-Naghdi equations with thermal effects was given in \cite{HolmLuesink2021}. Moreover, in \cite{SD2010}, the nontraditional Coriolis terms are discussed with reference to the Rossby number and aspect ratio.

We begin with the Lagrangian for the rotating Euler-Boussinesq equations in our three dimensional domain
\begin{equation}
    \ell_{EB} = \int_{\mathcal{D}\times[-H,\eta]} Dd^3x \bigg[ \underbrace{\frac12\left( |\bu|^2 + w^2 \right)}_{\textrm{Kinetic energy}} + \underbrace{\bR(\bx)\cdot\bu + S(\bx)w}_{\textrm{Coriolis terms}} - \underbrace{\rho gz}_{\substack{\text{Potential}\\\text{energy}}} - \underbrace{p(D^{-1}-1)}_{\substack{\text{Incompressibility}\\\text{in 3D}}}  \bigg] \,,
\end{equation}
where $Dd^3x \in {\rm Vol}(M)$ is the advected mass density and $\rho$ is an advected scalar thermal buoyancy. Denoting the three dimensional gradient by $\nabla_3$, the vector quantity
\begin{align}
2\bs{\Omega} = \nabla_3\times\bR(\bx) + \nabla_3 S(\bx)\times\zh
  \label{Coriolis}
\end{align}
is twice the rotation vector, $\bs{\Omega}$, which we note has a horizontal component when $S$ has a nonvanishing gradient. The domain is bounded below by bathymethy $z=-H(\bx,t)$ and the free surface $z=\eta(\bx,t)$, and we introduce the quantity $h = \eta+H$, which is the layer thickness. In a previous work \cite{HolmLuesink2021}, a similar variational method was used to derive thermal ocean models where, instead of the full thermal density $\rho$, the authors considered a stratified fluid by using a buoyancy variable $b$ related to $\rho$ through $\rho = \rho_0(1+b)$, where $\rho_0$ is some reference density. In this paper, we will instead consider the variable $\rho$ to be our advected variable. The effect of this change on the calculations to follow is minimal, as can be verified by comparing this paper with the Appendix of \cite{HolmLuesink2021}.

Since the domain has moving boundaries, care must be taken over the boundary conditions. On a moving surface defined by $F(x,y,z,t) = 0$, a particle remains on this surface if its Lagrangian differential is zero. That is, if $\p_t F + \bu\cdot\nabla F + w\p_z F = 0$. The standard kinematic boundary for free surfaces is recovered by setting $F = z-\eta(x,y,t)$, that is
\begin{equation}\label{eqn:kinematic_upper_boundary_condition}
    \p_t \eta + \bu|_{z=\eta}\cdot\nabla \eta = w|_{z=\eta} \,.
\end{equation}
Similarly, in this case the bottom boundary, $z=-H$, is permitted to move according to some exogenous dynamical processes. In this case, setting $F=z+H$ gives the condition that a particle on the bottom boundary remains on the bottom boundary, i.e.
\begin{equation}\label{eqn:kinematic_lower_boundary_condition}
    \p_t H + \bu|_{z=-H}\cdot\nabla H = -w|_{z=-H} \,.
\end{equation}

Since the three dimensional flow governed by the Euler-Boussinesq model is incompressible, we can express the vertical derivative of the vertical velocity in terms of the two dimensional gradient, $\nabla$, as
\begin{equation}
    \p_zw = - \nabla\cdot \bu \,.
\end{equation}
To determine the vertical velocity at some point $(x,y,z)$, we integrate this constraint from $z=-H$ to $z$ to give
\begin{equation}
\begin{aligned}
    w &= - \int_{-H}^z \nabla\cdot\bu dz' + w|_{z=-H} 
    \\
    &= - \nabla\cdot\int_{-H}^z \bu \,dz' + \bu|_{z=-H}\cdot\nabla H + w|_{z=-H} =  - \nabla\cdot\int_{-H}^z \bu \,dz' - \p_t H \,,
\end{aligned}
\end{equation}
where we have used a multidimensional version of Leibniz' integration rule and the boundary condition \eqref{eqn:kinematic_lower_boundary_condition}. To arrive at the Green-Naghdi model, we must introduce a columnar motion approximation. Put simply, this can be achieved by replacing the horizontal velocity field $\bu$ by its vertical average\footnote{We will, in general, denote vertically averaged variables using the 'overbar' notation.} $\ob\bu$. Since, the vertically averaged horizontal velocity depends only on $x,y,t$, we have that
\begin{equation}
\begin{aligned}
    w &= -\nabla\cdot\left( \ob\bu (z+H) \right) - \p_tH 
    \\
    &= -(z+H)\nabla\cdot\ob\bu - \ob\bu\cdot\nabla H - \p_t H = -(z+H)\nabla\cdot\ob\bu - \frac{dH}{dt} \,,
\end{aligned}
\end{equation}
where we have introduced the notation
\begin{equation}
    \frac{d}{dt} := \p_t + \ob\bu\cdot\nabla
\end{equation}
for the material derivative of a scalar quantity along a Lagrangian trajectory of the vertically averaged velocity field $\ob\bu$. The Lagrangian can be therefore approximated as
\begin{equation}
\begin{aligned}
    \ell_{GN} &= \int_{\mathcal{D}} dxdy \int_{-H(\bx,t)}^{\eta(\bx,t)} dz \bigg[ \frac{|\ob\bu|^2}{2} + \frac12 \left( (z+H)(\nabla\cdot\ob\bu) + \frac{dH}{dt} \right)^2 
    \\
    &\qquad\qquad\qquad\qquad\qquad + \bR(\bx)\cdot\ob\bu - S\left((z+H)(\nabla\cdot\ob\bu) + \frac{dH}{dt} \right) - \ob\rho g z \bigg]
    \\
    &= \int_{\mathcal{D}} dxdy \bigg[ \frac{h}{2}|\ob\bu|^2 + h \bR(\bx)\cdot\ob\bu + \frac16 h^3(\nabla\cdot\ob\bu)^2 + \frac12h^2 (\nabla\cdot\ob\bu)\frac{dH}{dt} + \frac{h}{2}\left(\frac{dH}{dt}\right)^2
    \\
    &\qquad\qquad\qquad - \frac12h^2 S(\bx)(\nabla\cdot\ob\bu) - h S(\bx)\frac{dH}{dt} - \frac12 \ob\rho (\eta^2-H^2) \bigg]
    \\
    &= \int_{\mathcal{D}} dxdy \,h\bigg[ \frac{|\ob\bu|^2}{2} + \bR(\bx)\cdot\ob\bu + \frac16 h^2(\nabla\cdot\ob\bu)^2 + \frac12h (\nabla\cdot\ob\bu)\left(\frac{dH}{dt} - S(\bx)\right) 
    \\
    &\qquad\qquad\qquad + \frac12\frac{dH}{dt}\left( \frac{dH}{dt} - 2S(\bx) \right) - \frac12\ob{\rho}g (h-2H) \bigg] \,,
\end{aligned}
\end{equation}
and this is the Lagrangian for our thermal rotating Green-Naghdi model with variable bathymetry. The equation of motion then follows from the Euler-Poincar\'e theorem, and takes the form
\begin{equation}
  \frac{d}{dt}\left( \frac{1}{h}\frac{\delta\ell_{GN}}{\delta\ob\bu} \right)
    + \frac{1}{h}\frac{\delta\ell_{GN}}{\delta \ob u^j}\nabla \ob u^j
    = \nabla \frac{\delta\ell_{GN}}{\delta h} - \frac{1}{h}\frac{\delta\ell_{GN}}{\delta\ob\rho}\nabla\ob\rho \,,
\label{eqn:EP_GN}
\end{equation}
where the right hand side of this equation results from the so-called `diamond terms' inherant to the semidirect product theory of continuum dynamics with advected quantities \cite{HMR1998}. Here, this is the symmetry broken Euler-Poincar\'e equation where the variables naturally live within a semidirect product between the diffeomorphisms on ${\rm Diff}(\mathcal{D})$ (or it's associated algebra of vector fields) and the space of advected quantities $\Omega^0(\mathcal{D})\oplus \Omega^2(\mathcal{D})$. The left hand side of the above equation is a coordinate expression of
\begin{equation}
    (\p_t + \mcal{L}_{\ob u})\left( \frac{1}{h}\frac{\delta\ell_{GN}}{\delta \ob u} \right) \,,
\end{equation}
where $\mathcal{L}_{\ob u}$ denotes the Lie derivative by the vector field $\ob{u}$.

The variational derivatives of the Lagrangian are computed as
\begin{equation}
\begin{aligned}
    \frac{\delta\ell_{GN}}{\delta h} &= \frac12|\ob\bu|^2 + \bR(\bx)\cdot\ob\bu + \frac12h^2(\nabla\cdot\ob\bu)^2 
    \\
    &\qquad + h(\nabla\cdot\ob\bu)\left( \frac{dH}{dt} - S(\bx) \right) + \frac12\frac{dH}{dt}\left( \frac{dH}{dt} - 2S(\bx) \right) - \ob\rho g(h-H)
    \,,\\
    \frac{\delta\ell_{GN}}{\delta\ob\rho} &= -\frac12 gh(h-2H)
    \,,\\
    \frac{1}{h}\frac{\delta\ell_{GN}}{\delta\ob\bu} &= \ob\bu + \bR(\bx)-\frac{1}{3h}\nabla\left( h^3(\nabla\cdot\ob\bu) \right) + \frac12 h(\nabla\cdot\ob\bu)\nabla H - \frac{1}{2h}\nabla\left( h^2\frac{dH}{dt} \right) + \frac{1}{2h}\nabla\left( h^2S(\bx) \right)
    \\
    &\qquad\qquad +\frac{1}{2}\frac{dH}{dt} \nabla H - S(\bx)\nabla H \\
    &= \ob\bu + \bR(\bx) - \frac{1}{h}\nabla(h^2F) + G\nabla H \,,
\end{aligned}
\end{equation}
where $F$ and $G$ are given by (cf. equation \eqref{ab})
\begin{equation}
\begin{aligned}
    F &= \frac13h\nabla\cdot\ob\bu + \frac12\left(\frac{dH}{ dt}-S(\bx)\right)
    \,,\\
    G &= \frac12h\nabla\cdot\ob\bu + \left( \frac{dH}{dt} - S(\bx) \right)
    \,.
\end{aligned}
\end{equation}

To make sense of the left hand side of the Euler-Poincar\'e equation \eqref{eqn:EP_GN}, we will make use of two equivalent forms of the Lie derivative of a vector field acting on a one-form, in two dimensional domains, which are
\begin{equation*}
    \mathcal{L}_{\ob \bu}(\bs{A}\cdot d\bx) = \left( \ob\bu\cdot\nabla\bs{A} + A_j\nabla\ob u^j \right)\cdot d\bx = \left( (\nabla^\perp\cdot \bs{A})\ob\bu^\perp + \nabla(\ob\bu\cdot\bs{A})\right)\cdot d\bx \,,
\end{equation*}
where we have introduced the perpendicular gradient, using the notation $(x,y)^\perp = (-y,x)$. Thus, after substituting in the variational derivatives, we have that the left hand side of the Euler-Poincar\'e momentum equation is given by
\begin{equation}\label{eqn:EP_LHS_1}
\begin{aligned}
    (\p_t + \mathcal{L}_{\ob u})\left( \frac{1}{h}\frac{\delta\ell_{GN}}{\delta \ob u} \right) &= \left( \p_t\ob\bu + \ob\bu\cdot\nabla\ob\bu + \ob u_j\nabla\ob u^j + (\nabla^\perp\cdot\bR)\ob\bu^\perp + \nabla(\ob\bu\cdot\bR)\right)\cdot d\bx
    \\
    &\qquad +(\p_t+\mathcal{L}_{\ob u})\left( \bigg(- \frac{1}{h}\nabla\left( h^2F \right) + G\nabla H \bigg)\cdot d\bx \right)\,,
\end{aligned}
\end{equation}
where $d\bx$ is the coordinate basis for one-forms. In the following calculations, we will make use of the fact that $h\,d^2x$ is advected, i.e.
\begin{equation}
    0 = (\p_t + \mathcal{L}_{\ob u})(h\,d^2x) = (\p_t h + \nabla\cdot(h\ob\bu) )d^2x = \left( \frac{dh}{dt} + h(\nabla\cdot\ob\bu) \right)d^2x \,.
\end{equation}
In the expression for $\frac{1}{h}\frac{\delta\ell_{GN}}{\delta \ob u}$, since we have divided through by the density, when $h$ appears it does so as a \emph{function} without its coordinate basis $d^2x$. Thus, we have that expanding out equation \eqref{eqn:EP_LHS_1} produces
\begin{equation}
\begin{aligned}
    (\p_t + \mathcal{L}_{\ob u})\left( \frac{1}{h}\frac{\delta\ell_{GN}}{\delta \ob u} \right) &= \bigg( \p_t\ob\bu + \ob\bu\cdot\nabla\ob\bu + \ob u_j\nabla\ob u^j + (\nabla^\perp\cdot\bR)\ob\bu^\perp + \nabla(\ob\bu\cdot\bR) - \frac{1}{h}\nabla\left( h^2\frac{dF}{dt} \right) 
    \\
    &\qquad + \frac{dG}{dt}\nabla H + \frac{1}{h^2}\frac{dh}{dt}\nabla(h^2F) - \frac{1}{h}\nabla\left( 2h \frac{dh}{dt} F \right) + G\nabla\frac{dH}{dt} \bigg)\cdot d\bx\,,
\end{aligned}
\end{equation}
where we have used that $\p_t + \mathcal{L}_{\ob u}$ obeys the standard rules of calculus and is $d/dt$ when it acts on the functions $F,G,h,H$. Making use of the explicit formulae for $F$ and $G$, as well as the fact that $h\,d^2x$ is advected as a volume form, equation \eqref{eqn:EP_LHS_1} can now be expressed as
\begin{equation*}
\begin{aligned}
    (\p_t + \mathcal{L}_{\ob u})\left( \frac{1}{h}\frac{\delta\ell_{GN}}{\delta \ob u} \right) &= \bigg( \p_t\ob\bu + \ob\bu\cdot\nabla\ob\bu + \ob u_j\nabla\ob u^j + (\nabla^\perp\cdot\bR)\ob\bu^\perp + \nabla(\ob\bu\cdot\bR) - \frac{1}{h}\nabla\left( h^2\frac{dF}{dt} \right) + \frac{dG}{dt}\nabla H
    \\
    &\qquad  -\frac{\nabla\cdot\ob\bu}{h}\nabla\left(h^2\left(\frac13h\nabla\cdot\ob\bu + \frac12\left(\frac{dH}{ dt}-S\right)\right)\right)
    \\
    &\qquad + \frac{1}{h}\nabla\left( 2h (h\nabla\cdot\ob\bu)\left( \frac13h\nabla\cdot\ob\bu + \frac12\left(\frac{dH}{ dt}-S\right)\right) \right) 
    \\
    &\qquad + \left(\frac12h\nabla\cdot\ob\bu + \left( \frac{dH}{dt} - S \right)\right)\nabla\frac{dH}{dt} \bigg)\cdot d\bx 
    \\
    &= \bigg( \p_t\ob\bu + \ob\bu\cdot\nabla\ob\bu + \ob u_j\nabla\ob u^j + (\nabla^\perp\cdot\bR)\ob\bu^\perp + \nabla(\ob\bu\cdot\bR) - \frac{1}{h}\nabla\left( h^2\frac{dF}{dt} \right) + \frac{dG}{dt}\nabla H
    \\
    &\qquad -\frac13(h\nabla\cdot\ob\bu)\nabla\left( h\nabla\cdot\ob\bu \right) - \frac23 (h\nabla\cdot\ob\bu)^2\frac{\nabla h}{h} - \frac{h\nabla\cdot\ob\bu}{2}\nabla\left( \frac{dH}{dt} - S \right)
    \\
    &\qquad - (\nabla\cdot\ob\bu)\left(\frac{dH}{dt} - S\right) \nabla h + \frac23 \nabla\left( (h\nabla\cdot\ob\bu)^2\right) + \frac23 (h\nabla\cdot\ob\bu)^2\frac{\nabla h}{h}
    \\
    &\qquad (h\nabla\cdot\ob\bu)\left(\frac{dH}{dt}-S\right)\frac{\nabla h}{h} + \nabla\left( (h\nabla\cdot\ob\bu)\left(\frac{dH}{dt}-S\right) \right) + \frac12(h\nabla\cdot\ob\bu)\nabla\frac{dH}{dt} 
    \\
    &\qquad + \left(\frac{dH}{dt} - S\right)\nabla\frac{dH}{dt} \bigg)\cdot d\bx \,,
\end{aligned}
\end{equation*}
where we have applied the product and chain rules to go from the first line to the second. Canceling terms and grouping terms which can be written as a gradient, we have that
\begin{equation*}
\begin{aligned}
    (\p_t + \mathcal{L}_{\ob u})\left( \frac{1}{h}\frac{\delta\ell_{GN}}{\delta \ob u} \right) &= \bigg( \p_t\ob\bu + \ob\bu\cdot\nabla\ob\bu + \ob u_j\nabla\ob u^j + (\nabla^\perp\cdot\bR)\ob\bu^\perp + \nabla(\ob\bu\cdot\bR) - \frac{1}{h}\nabla\left( h^2\frac{dF}{dt} \right) + \frac{dG}{dt}\nabla H
    \\
    &\qquad + \frac12\nabla\left( (h\nabla\cdot\ob\bu)^2 \right) + \nabla\left( (h\nabla\cdot\ob\bu)\left(\frac{dH}{dt}-S\right) \right) + \frac12\nabla\left(\frac{dH}{dt}\right)^2 
    \\
    &\qquad - S\nabla\frac{dH}{dt} + \frac{h\nabla\cdot\ob\bu}{2}\nabla S \bigg)\cdot d\bx
    \\
    &= \bigg( \p_t\ob\bu + \ob\bu\cdot\nabla\ob\bu + \ob u_j\nabla\ob u^j + (\nabla^\perp\cdot\bR)\ob\bu^\perp + \nabla(\ob\bu\cdot\bR) - \frac{1}{h}\nabla\left( h^2\frac{dF}{dt} \right) + \frac{dG}{dt}\nabla H
    \\
    &\qquad + \nabla\left( \frac12(h\nabla\cdot\ob\bu)^2 + (h\nabla\cdot\ob\bu)\left( \frac{dH}{dt} - S \right) + \frac12 \frac{dH}{dt}\left( \frac{dH}{dt} - 2S\right)  \right) 
    \\
    &\qquad + \left( \frac{dH}{dt} + \frac{h\nabla\cdot\ob\bu}{2} \right)\nabla S \bigg)\cdot d\bx \,.
\end{aligned}
\end{equation*}
The right hand side of the Euler-Poincar\'e equation \eqref{eqn:EP_GN} is
\begin{equation}
\begin{aligned}
    \nabla \frac{\delta\ell_{GN}}{\delta h} - \frac{1}{h}\frac{\delta\ell_{GN}}{\delta\ob\rho}\nabla\ob\rho &= \nabla\bigg( \frac12|\ob\bu|^2 + \bR(\bx)\cdot\ob\bu + \frac12h^2(\nabla\cdot\ob\bu)^2 + h(\nabla\cdot\ob\bu)\left( \frac{dH}{dt} - S \right) 
    \\
    &\qquad + \frac12\frac{dH}{dt}\left( \frac{dH}{dt} - 2S \right) - \ob\rho g(h-H) \bigg) + \frac12g(h-2H)\nabla\ob\rho
    \\
    &=\nabla\left( \frac{|\ob\bu|^2}{2} + \bR\cdot\ob\bu \right) - \frac12gh\nabla\ob\rho - g\ob\rho \nabla(h-H) 
    \\
    &\qquad + \nabla\left( \frac12(h\nabla\ob\bu)^2 + (h\nabla\cdot\ob\bu)\left(\frac{dH}{dt}-S\right) + \frac12\frac{dH}{dt}\left(\frac{dH}{dt}-2S\right) \right) \,.
\end{aligned}
\end{equation}
Combining these calculations finally gives the following extended Green-Naghdi equations
\begin{align}
    \frac{d\ob\bu}{dt} + (\nabla^\perp\cdot\bR)\ob\bu^\perp &= -g\ob\rho \nabla(h-H) - \frac12gh\nabla\ob\rho + \frac{1}{h}\nabla\left(h^2\frac{dF}{dt} \right) - \frac{dG}{dt}\nabla H - \left( \frac{dH}{dt} + \frac{h\nabla\cdot\ob\bu}{2} \right)\nabla S
    \label{eqn:extended_GN_1}
    \,,\\
    \frac{d\ob\rho}{dt} &= 0
    \label{eqn:extended_GN_2}
    \,,\\
    \frac{dh}{dt} &= - h\nabla\cdot\ob\bu
    \label{eqn:extended_GN_3}
    \,.
\end{align}

\section{Some features of the model}

The equations presented in the previous section have a potential vorticity formulation however, since the model includes thermal effects, this potential vorticity is not conserved. The most direct way to see this is through Kelvin's circulation theorem, which follows from the Euler-Poincar\'e formulation of the model as follows. To express Kelvin's circulation theorem, we first must introduce the notion of closed continuous loops within the fluid, $c:C^1\rightarrow \mathcal{D}$, the space of which is denoted by $\mathcal{C}(\mathcal{D})$. The group ${\rm Diff}(\mathcal{D})$ acts on $\mathcal{C}(\mathcal{D})$ from the left by composition, that is, $\phi c := \phi \circ c : C^1\rightarrow \mathcal{D}$ is a group action for $\phi \in {\rm Diff}(M)$ and $c \in \mathcal{C}(\mathcal{D})$. For an initial loop of fluid $c_0 \in \mathcal{C}(\mathcal{D})$, if the flow map of the fluid $\phi_t \in {\rm Diff}(\mathcal{D})$ acts on this initial loop then we obtain a closed loop moving with the flow $c_t = \phi_tc_0$.

\begin{proposition}
    For a closed loop, $c_t \in \mathcal{C}(\mathcal{D})$, enclosing a region of fluid, $S_t \subset \mathcal{D}$, and moving with the flow of the Green-Naghdi equations \eqref{eqn:extended_GN_1} - \eqref{eqn:extended_GN_3}, we have the following Kelvin circulation relation
    \begin{equation}\label{eqn:Kelvin_statement}
        \frac{d}{dt}\oint_{c_t} \left(\ob\bu + \bR(\bx) - \frac{1}{h}\nabla(h^2F) + G\nabla H \right)\cdot d\bx = \frac{g}{2}\int_{S_t}\nabla^\perp(h - 2H)\cdot\nabla\ob\rho \,dS \,.
    \end{equation}
\end{proposition}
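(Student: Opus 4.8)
The plan is to read off the result directly from the Euler--Poincar\'e structure of Section~\ref{GN}, using the standard Kelvin transport identity for advected one-forms. Write $m := \frac{1}{h}\frac{\delta\ell_{GN}}{\delta\ob u}$ for the momentum one-form appearing in \eqref{eqn:EP_GN}, whose coordinate expression is $m = \big(\ob\bu + \bR(\bx) - \frac{1}{h}\nabla(h^2F) + G\nabla H\big)\cdot d\bx$, i.e.\ precisely the integrand on the left of \eqref{eqn:Kelvin_statement}. The first step is to use that $c_t = \phi_t c_0$ is transported by the flow map $\phi_t\in{\rm Diff}(\mathcal{D})$ generated by $\ob\bu$; together with the pullback identity $\frac{d}{dt}\phi_t^*\alpha = \phi_t^*\big((\p_t + \mathcal{L}_{\ob u})\alpha\big)$ for a time-dependent one-form $\alpha$, this gives
\begin{equation*}
\frac{d}{dt}\oint_{c_t} m = \oint_{c_t} (\p_t + \mathcal{L}_{\ob u})\, m \,.
\end{equation*}

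The second step is purely algebraic: substitute the Euler--Poincar\'e equation \eqref{eqn:EP_GN}, which states that $(\p_t + \mathcal{L}_{\ob u})m = \nabla\frac{\delta\ell_{GN}}{\delta h}\cdot d\bx - \frac{1}{h}\frac{\delta\ell_{GN}}{\delta\ob\rho}\nabla\ob\rho\cdot d\bx$. The term $\nabla\frac{\delta\ell_{GN}}{\delta h}\cdot d\bx = d\big(\frac{\delta\ell_{GN}}{\delta h}\big)$ is exact and hence integrates to zero around the closed loop $c_t$. Using the variational derivative $\frac{\delta\ell_{GN}}{\delta\ob\rho} = -\tfrac12 g h(h-2H)$ computed in Section~\ref{GN}, the surviving contribution is $-\frac{1}{h}\frac{\delta\ell_{GN}}{\delta\ob\rho}\nabla\ob\rho\cdot d\bx = \tfrac{g}{2}(h-2H)\,d\ob\rho$, so that
\begin{equation*}
\frac{d}{dt}\oint_{c_t} m = \frac{g}{2}\oint_{c_t} (h-2H)\,d\ob\rho \,.
\end{equation*}

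The final step applies Stokes' theorem (Green's theorem in the plane) to convert the remaining circulation into a surface integral over $S_t$. Since $d\big((h-2H)\,d\ob\rho\big) = d(h-2H)\wedge d\ob\rho$, whose coordinate coefficient is the scalar curl $\nabla^\perp(h-2H)\cdot\nabla\ob\rho$ (equivalently $\nabla^\perp\cdot\big((h-2H)\nabla\ob\rho\big) = \nabla^\perp(h-2H)\cdot\nabla\ob\rho$, since the curl of a gradient vanishes), one obtains $\frac{d}{dt}\oint_{c_t} m = \frac{g}{2}\int_{S_t}\nabla^\perp(h-2H)\cdot\nabla\ob\rho\,dS$, which is \eqref{eqn:Kelvin_statement}.

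I do not anticipate a serious obstacle: the content is bookkeeping once the Euler--Poincar\'e form \eqref{eqn:EP_GN} is available. The one point demanding care is the first step --- justifying that differentiating the circulation around the \emph{material} loop $c_t$ reproduces the loop integral of the Lie-derivative--transported one-form $(\p_t + \mathcal{L}_{\ob u})m$ --- together with being careful about the exact-form cancellation: all of the non-hydrostatic, bathymetric and Coriolis contributions either stay inside $m$ on the left or appear only through the exact piece $\nabla\frac{\delta\ell_{GN}}{\delta h}\cdot d\bx$, so that the thermal misalignment term $\nabla^\perp(h-2H)\cdot\nabla\ob\rho$ is genuinely the sole source of circulation generation.
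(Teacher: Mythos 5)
Your proof is correct and follows essentially the same route as the paper: the ``transport theorem for the contour'' invoked in the paper is precisely the pullback/Lie-derivative identity you use in your first step, and both arguments then substitute the Euler--Poincar\'e equation \eqref{eqn:EP_GN}, discard the exact piece $\nabla\frac{\delta\ell_{GN}}{\delta h}\cdot d\bx$ around the closed loop, and finish with Green's theorem. Your write-up is slightly more explicit than the paper's about why the gradient term vanishes and how the wedge product yields $\nabla^\perp(h-2H)\cdot\nabla\ob\rho$, but the substance is identical.
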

\begin{proof}
    The proof of this follows from the transport theorem for the contour $c_t$ moving with the fluid velocity $\ob\bu$, and the form of the momentum equation presented in equation \eqref{eqn:EP_GN}. Indeed,
    \begin{align}\begin{split} 
        \frac{d}{dt}\oint_{c_t}
        \left( \frac{1}{h}\frac{\delta\ell_{GN}}{\delta\ob\bu} \right)\cdot d\bx
        &=  \oint_{c_t}
        \left[\frac{d}{dt}\left( \frac{1}{h}\frac{\delta\ell_{GN}}{\delta\ob\bu} \right) + \frac{1}{h}\frac{\delta\ell_{GN}}{\delta \ob u^j}\nabla \ob u^j\right]\cdot d\bx
        \\
        &
        = \oint_{c_t}\left[\nabla \frac{\delta\ell_{GN}}{\delta h} - \frac{1}{h}\frac{\delta\ell_{GN}}{\delta\ob\rho}\nabla\ob\rho \right]\cdot d\bx = \oint_{c_t} \frac12 g(h-2H)\nabla\ob\rho \cdot d\bx \,,
        \label{kel}
\end{split}\end{align}
    and Green's theorem then gives the required result.
\end{proof}
Notice that, for this model, we have that circulation is created whenever the gradient of the bathymetry or surface elevation does not align with the gradient of the thermal buoyancy. The Kelvin circulation result \eqref{eqn:Kelvin_statement} permits us to understand the potential vorticity dynamics. In particular, we define the potential vorticity by
\begin{equation}
    q = \frac{1}{h}\nabla^\perp\cdot\left(\ob\bu + \bR(\bx) - \frac{1}{h}\nabla(h^2F) + G\nabla H \right) \,,
\end{equation}
and the dynamics of $q$ are then governed by
\begin{equation}\label{eqn:PV_dynamics}
    \p_tq + \ob\bu\cdot\nabla q = \frac{g}{2}\frac{1}{h}\nabla^\perp(h - 2H)\cdot\nabla\ob\rho \,.
\end{equation}
In the definition of $q$ given above, it has been expressed in terms of the perpendicular two dimensional gradient to ensure that the equations are expressed in terms of a self consistent two dimensional vector calculus convention. Reintroducing the third coordinate in the vertical direction, we note that this is equivalent to $q = \frac{1}{h}\zh\cdot\nabla_3\times\frac{1}{h}\frac{\delta\ell_{GN}}{\delta\ob\bu}$.
\begin{remark}[The influence of thermal effects]
    The presence of thermal inhomogeneity in the model has resulted in forcing terms on the right hand side of the Kelvin circulation statement \eqref{eqn:Kelvin_statement} and on the potential vorticity equation \eqref{eqn:PV_dynamics}. When the advected quantitiy $\ob\rho$ is not present (i.e. when it is constant), these terms vanish and the potential vorticity is purely advected. Thus, in the case without thermal effects, the equations have infinitely many conserved quantities (Casimirs) which are the arbitrary functions of $q$ integrated against the measure $h dxdy$. 
\end{remark}
The equations presented in this paper, complete with thermal effects, also possess infinitely many conserved integral quantities. Corresponding to suitably well-behaved functions $\Phi,\Psi$ of the thermal buoyancy $\ob\rho$, we have the following Casimir conserved quantity
\begin{equation}\label{eqn:Casimirs}
    C_{\Phi,\Psi} = \int_{\mathcal{D}} \left(\Phi(\ob\rho) + q\Psi(\ob\rho)\right) h\,dxdy \,.
\end{equation}
There are infinitely many such quantities, inherited from the dimension of the space of suitable functions of $\ob\rho$. We here note that, in the absense of thermal effects, the Casimirs for the standard Green-Naghdi (or shallow water) equations are dramatically different. That is, potential vorticity is an advected variable when $\ob\rho$ is a constant and, as is known for standard rotating shallow water equations \cite{DS05}, the Casimir invariants would be simply arbitrary functions of potential vorticity integrated against the depth weighted measure. In the symmetry broken thermal fluid model considered here, the form of the Casimirs is given in \eqref{eqn:Casimirs} and is standard for such semidirect product systems (see e.g. \cite{Dellar2003, HHS2025}) and was also presented for the Green-Naghdi equations in \cite{ErwinThesis}. These Casimirs have the same structure as those for the thermal shallow water model (also known as the ${\rm IL}^0$ equations) \cite{BV21a}.

\section{Comments and outlook}

In this paper, we introduce an extension to the standard Green-Naghdi model for vertically averaged geophysical flows. In particular, we introduce a complete Coriolis force \cite{DS05}, thermal effects, and a time-dependent bathymetry. This time dependent bathymetry makes the model suitable to be used as the topmost layer of a multilayer ocean model, in which the bottom boundary is taken to be the thermocline or a free boundary between layers with different modelling assumptions. This model has been derived using the Euler-Poincar\'e approach, and is therefore intrinsically geometric in nature. This has permitted the formualation of Kelvin's circulation theorem, a potential vorticity formulation, and infinitely many `Casimir' conserved quantities. In a future work, it may be of interest to introduce the nondimensional form of the equations and illuminate the relationship between the model derived here and a family of related models. Such a family is well known for the standard case for which the bottom boundary is a fixed function of space.

This work has application towards other geophysical problems for which a changing bottom boundary is relevant. In particular, this work is of relevance to wave generation by landslides or earthquakes \cite{FB2024}, a problem for which accurately accounting for the time dependence in the bathymetry is crucial for accurate forecasting. The equations presented here are deterministic in nature, and follow from classical geometric approaches to model derivation. It is possible, following e.g. \cite{holm2015variational,ST2024,HH2021,ErwinThesis}, to include stochastic terms through the geometric structure of the model for the purposes of data assimilation and uncertainty quantification. There is particular interest in interpreting data in this field due to its application to tsunami forecasting and, in particular, the assimilation of experimental wave data (see  e.g. \cite{WND2015}) into a stochastic Green-Naghdi model is an interesting proposition. Testing this proposal is beyond the scope of the present paper.

\paragraph{Acknowledgments.} We would like to thank our friends R. Hu, C. Cotter, J. Woodfield, and E. Luesink for their helpful discussions on geophysical fluids during the course of this work and beyond. Oliver D. Street acknowledges funding for a research fellowship from Quadrature Climate Foundation. Darryl D. Holm was partially supported during the present work by European Research Council (ERC) Synergy grant Stochastic Transport in Upper Ocean Dynamics (STUOD) – DLV-856408, as well as partially supported by Office of Naval Research (ONR) grant award N00014-22-1-2082, Stochastic Parameterization of Ocean Turbulence for Observational Networks.

\bibliographystyle{alpha}
\bibliography{main}

\end{document}